\def\BibTeX{{\rm B\kern-.05em{\sc i\kern-.025em b}\kern-.08em
    T\kern-.1667em\lower.7ex\hbox{E}\kern-.125emX}}
\newtheorem{remark}{Remark}
\newtheorem{proposition}{Proposition}
\newtheorem{assumption}{Assumption}
\newtheorem{lemma}{Lemma}
\def\lef[{\left[\begin{array}}
\def\rig]{\end{array}\right]}
\def\rea{\mathbb{R}}
\newcommand{\col}{ \mbox{col} }
\def\rea{\mathbb{R}}
\def\begequ{\begin{equation}}
\def\endequ{\end{equation}}
\def\lab{\label}
\def\begite{\begin{itemize}}
\def\endite{\end{itemize}}
\def\begarr{\begin{array}}
\def\endarr{\end{array}}
\def\begequarr{\begin{eqnarray}}
\def\endequarr{\end{eqnarray}}
\def\calh{{\cal H}}
\def\calh{{\cal H}}
\def\liminf{\lim_{t \to \infty}}
\def\liminf{\lim_{t \to \infty}}
\def\L2{{\cal L}_2}
\def\L2e{{\cal L}_{2e}}
\def\rea{\mathbb{R}}
\def\col{\mbox{col}}
\def\et{\varepsilon_t}
\def\begmat#1{\begin{bmatrix}#1\end{bmatrix}}
\def\begali#1{\begin{align}{#1}\end{align}}
\def\begalis#1{\begin{align*}{#1}\end{align*}}
\def\begsubequ{\begin{subequations}}
\def\endsubequ{\end{subequations}}
\def\begequarr{\begin{eqnarray}}
\def\endequarr{\end{eqnarray}}
\def\begequarrs{\begin{eqnarray*}}
\def\endequarrs{\end{eqnarray*}}
\def\begarr{\begin{array}}
\def\endarr{\end{array}}
\def\begequ{\begin{equation}}
\def\endequ{\end{equation}}
\def\lab{\label}
\def\begdes{\begin{description}}
\def\enddes{\end{description}}
\def\begenu{\begin{enumerate}}
\def\begite{\begin{itemize}}
\def\endite{\end{itemize}}
\def\endenu{\end{enumerate}}
\def\lef[{\left[\begin{array}}
\def\rig]{\end{array}\right]}
\def\begcen{\begin{center}}
\def\endcen{\end{center}}
\def\begrem{\begin{remark}\rm}
\def\endrem{\end{remark}}
\def\begassums{\begin{assumption*}}
\def\endassums{\end{assumption*}}
\def\begassu{\begin{ass}}
\def\endassu{\end{ass}}
\def\beglem{\begin{lemma}}
\def\endlem{\end{lemma}}
\def\begcor{\begin{corollary}}
\def\endcor{\end{corollary}}
\def\begfac{\begin{fact}}
\def\endfac{\end{fact}}
\def\begass{\begin{assumption}}
\def\endass{\end{assumption}}
\def\begmat#1{\begin{bmatrix}#1\end{bmatrix}}
\def\begali#1{\begin{align}{#1}\end{align}}
\def\begalis#1{\begin{align*}{#1}\end{align*}}
\def\IJRNLC{{\it Int. J. on Robust and Nonlinear Control}}
\def\AUT{{\it Automatica}}
\def\CEP{{\it Control Engg. Practice}}
\newtheorem{definition}{Definition}
\begin{document}
\title{Adaptive Compensation of Nonlinear Friction in Mechanical Systems Without Velocity Measurement}
\author{Jose Guadalupe Romero,  Romeo Ortega,  Leyan Fang and Alexey Bobtsov
\thanks{J.G. Romero, R. Ortega and Leyan Fang are with  the Department of Electrical and Electronic Engineering,  ITAM, R\'io Hondo 1, Mexico City, 01080, Mexico (e-mail: \{jose.romerovelaquez\}\{romeo.ortega\}\{leyan.fang\}@itam.mx). }
\thanks{A. Bobtsov is with the Control Systems and Robotics Department, ITMO, St. Petesburg, Russia (e-mail: bobtsov@itmo.ru).}
}
\maketitle
\begin{abstract}
Friction is an unavoidable phenomenon that exists in all mechanical systems incorporating parts with relative motion. It is well-known that friction is a serious impediment for precise servo control, hence the interest to devise a procedure to compensate for it---a subject that has been studied by many researchers for many years. The vast majority of friction compensation schemes reported in the literature rely on the availability of velocity measurements, an information that is hard to obtain. A second limitation of the existing procedures is that they rely on mathematical models of friction that contain several unknown parameters, some of them entering nonlinearly in the dynamic equations. In this paper we propose a globally convergent tracking controller for a mechanical system perturbed by static and Coulomb friction, which is a reliable mathematical model of the friction phenomenon, that does not rely one measurement of velocity. The key component is an immersion and invariance-based adaptive speed observer, used for the friction compensation. To the best of our knowledge, this is the first globally convergent solution to this challenging problem. We also present simulation results of the application of our observer for systems affected by friction, which is described by the more advanced LuGre model. 
\end{abstract}

\begin{IEEEkeywords}
Nonlinear friction; Adaptive Observers; Tracking control
\end{IEEEkeywords}

%%%%%%%%%%%%%%%%
\section{Introduction}
\lab{sec1}
%%%%%%%%%%%%%%%%
%
It is well known that one of the major limitations to achieve good performance in mechanical systems is the presence of friction, which  gives rise to control problems such as static errors, limit cycles, and stick-slip. Friction is a nonlinear phenomenon difficult to describe analytically. Different models have been proposed to capture this phenomenon, ranging from simple linear {\em static} models, like stiction and Coulomb friction, to the more precise {\em dynamic} models like Dahl \cite{DAH} or the LuGre \cite{CANetal} models. A survey of these models is presented in \cite{ARMetal}, see also \cite{RUDbook} for a recent comprehensive summary of the problems of friction modeling and compensation. Model-based friction compensation requires  friction parameter estimation, hence the need of adaptive friction compensation. The problem of adaptive friction compensation has a very long history that dates at least as far back as \cite{WINGIL}. It is presented in \cite{LANbook} as an application example of model reference adaptive control. Throughout the years, many publications have been devoted to this topic \cite{CANLIS,FRIPAR,LEOKRI,PANORTGAF,SANKELSAN,YAZKHO}, all of them assuming {\em velocty is measurable}. Different friction models are considered in these publications and, with the notable exception of \cite{PANORTGAF,SANKELSAN,YAZKHO} where Lyapunov methods are used, the stability analysis is based on linearized models, hence only local.  

Very few results have been reported on adaptive friction compensation {\em without} velocity measurement. An early reference is \cite{FRIMEN} where Lyapunov's First Method is used to conclude {\em local stability} of an observer-based method. Later, in \cite{WANMELKHO}, a claim of {\em global practical stability} for an observer-based design in the presence of {\em Coulomb friction} only is made. However, a detailed analysis of the results proves that the claim is incorrect. Indeed, correcting the scaling done in between equations (39) and (40) yields a bound on the Lyapunov function which is different from (51). Actually, the corrected bound is not even sufficient to guarantee {\em boundedness} of all signals. For the case of {\em stiction compensation}, a complete answer to the problem is given in \cite{ROMORT, ROMMORMAD, ROM} using Immersion and Invariance (I\&I) \cite{ASTKARORTbook} technique to design an adaptive speed observer. Unfortunately, as it is widely recognized, stiction models are not sufficient to capture the actual effect of friction in most practical examples.

In this paper we provide a complete answer to the problem of adaptive friction compensation without velocity measurements for the classical stiction plus Coulomb friction model.\footnote{A detailed simulation analysis that validates this approximation is carried-out in the paper.} Following the I\&I methodology \cite{ASTKARORTbook} we design an adaptive observer for the unmeasurable speed---that incorporates and estimator for the unknown parameters of the friction model---for which we guarantee {\em global convergence} of the speed estimate. Equipped with this observer, we then propose a certainty equivalent-based standard position tracking regulator that, under an additional {\em excitation assumption}, needed for the convergence of the estimated parameters, is also shown to ensure global tracking of any desired time-varying reference.   

To illustrate the generality of the proposed adaptive speed observer we show in the paper the application of a slight variation of it to the more general model of an electro-hydraulic system studied in \cite{ESTetal, RAZetal}. It is shown that the I\&I-based design achieves global convergence of the speed estimate with very little prior knowledge on the system---{\em e.g.}, an upperbound on one of the friction model parameters.      

Before closing this introduction we would like to make a comment on the application of the {\em high-gain-based sliding mode} technique to the adaptive friction compensation problem. As early as 1996, even prestigious robotics researchers \cite{ARINAN} were allured by the promises of this technique. Many papers reported the use of this methodology for this problem, with the stability claims always obscured by the underlying (usually hidden) assumptions imposed by this method. As illustration of this situation we would like to comment on the very recent report \cite{ESTetal}---whose claims and assumptions are representative of the ones made in all sliding-mode papers.\footnote{See, for instance, \cite{RAZetal} and references therein.} The paper makes appeal to the latest developments on this area, namely the use of high-order sliding modes \cite{LEV} where, contrary to conventional wisdom in control theory \cite{OGAbook}, it is suggested that it is possible to {\em differentiate} the signals as many times as desired---see also \cite{CRUMOR}. In  \cite{ESTetal} this is done three times to ``obtain" an ``internal model like" representation of the system in equation (10). Several obscure signal boundedness assumptions on the system signals are then imposed, including the one that the system {\em velocity and acceleration} are bounded---although not expressed in this words it is a consequence of Assumption 1, that imposes that $\delta_2(t)$ is a ``vanishing perturbation", whose derivative  is also bounded. Assumption 2, which reads like a tautology,\footnote{It is assumed that signals inside a {\em compact set} are bounded---forgetting that boundedness is a characteristic of compact sets.}  also (apparently) assumes boundedness of the {\em whole system state}, which is in direct contradiction of the hidden assumption that the control gains can be selected as large as desired. The authors then propose to use a ``state observer", described in (13), which is a copy of the {\em linear part} of the system dynamics (10), with correction terms consisting of fractional powers of the absolute value of the poisition observation error. Although it is argued that this technique is ``model-based", this construction does not seem to reflect this claim. A final, even more distressing condition, is imposed in Assumption 3, where boundedness of the derivative of the input of the tracking error dynamics is supposed. The final claim is made in Theorem 1 where it is asserted that, under Assumptions 1-3 and a {\em ``suitable" selection} of two tuning gains, there {\em exist sufficiently large} values for two more tuning gains that ensures the tracking error goes to zero in {\em ``some" finite time}---the fuzziness of this statement can hardly be exceeded. 

The remainder of our paper is organized as follows. Section \ref{sec2} contains the main result of the paper: the proposed globally convergent adaptive speed observer for a mechanical system perturbed by stiction plus Coulomb friction forces, whose parameters are unknown. In Section \ref{sec3} a simple certainty equivalent-based globally convergent position tracking regulator, which includes a friction compensator, is proposed. In Section \ref{sec4} we consider as case study the system studied in \cite{ESTetal}. In Section \ref{sec5} we present detailed simulations illustrating the applicability of our theoretical results---even for the case when the friction exhibits a more complicated behavior. We wrap-up the paper with some concluding remarks in Section \ref{sec6}.  
%%%%%%%%%%%
%%%%%%%%%%%%%%%
\section{Adaptive Observer design}
\lab{sec2}
%%%%%%%%%%%%%%%5
%
In this section we give the problem formulation and present the first main result of the paper.
%%%
\subsection{Model of the system and problem formulation}
\lab{subsec21}
%%%%%%%%%%%%%%%5
%
The dynamics describing a motor actuating a load with static and Coulomb frictions are given by
\begin{eqnarray}
\dot x_1 &=&x_2 \nonumber\\
\dot x_2&=&- \theta_1 x_2 -\theta_2 \tanh(\vartheta x_2) + u,
\label{sys1}
\end{eqnarray}
where  $x_1(t) \in \rea$ and $x_2(t) \in \rea$  are the generalized position and velocity, respectively, $u(t) \in \rea$ is the control input and $\theta_1>0$, $\theta_2>0$ and $\vartheta>0$ are constant coefficients. To simplify the notation we have lumped the motor inertia, which  is assumed known, into the control signal and the constants $\theta_1$ and $\theta_2$. The assumption of known motor inertia is done without loss of generality, since its incorporation as an additional unknown parameter in the adaptive estimator proposed below is  straightforward, 

The main objective is to design a globally convergent adaptive observer for $x_2$ considering that only $x_1$ is measurable, $\vartheta$ is {\em known} and $\theta_1$ and $\theta_2$ are {\it unknown}.\footnote{The assumption that $\vartheta$ is known is reasonable because, from the practical viewpoint, it is simply taken as a ``sufficiently large" number to make the $\tanh(\cdot)$ function qualify as a suitable smooth approximation of a relay.} 
\subsection{Proposed adaptive obsesrver}
\lab{subsec22}
%%%%%%%%%%%%%%%5
%
The adaptive velocity observer presented here exploits a monotonicity property, which is defined as follows.

\begin{definition} 
\label{def1}
A mapping ${\mathcal L} : \rea \to \rea$ is {\em strongly monotone} if there exists a constant $c>0$ such that
\begin{equation}
(a-b)[{\mathcal L} (a) - {\mathcal L} (b)] \geq c(a-b)^2 >0, \quad \forall a,b \in \rea, \; a\not= b.
\end{equation}
\end{definition}

  \begin{proposition}
\lab{pro1}
Consider the system \eqref{sys1} and assume $u$ is such that the state remains bounded. The I\&I adaptive velocity observer  
\begsubequ
		\lab{sys0}
		\begali{
			\lab{dotx2I}
			\dot x_{2I}=&\,-(\hat \theta_1+k_1) \hat x_2 -\hat \theta_2 \tanh (\vartheta \hat x_2) + u \\
	\lab{hx2}
	\hat x_2=& \; x_{2I} +k_1 x_1 \\	
	\lab{dotr1I}
			\dot \theta_{1I}=&\, \frac{\vartheta}{k_1} \hat x_2 (\dot x_{2I} +k_1 \hat x_2 ) \\
	\lab{hr1}
	\hat \theta_1=& \; \theta_{1I} -\frac{\vartheta}{2 k_1} \hat x^2_2 \\
	\lab{dotr2I}
			\dot \theta_{2I}=&\, \frac{\vartheta}{k_1} \tanh( \vartheta \hat x_2) (\dot x_{2I} +k_1 \hat x_2 ) \\
	\lab{hr2}
	\hat \theta_2=& \; \theta_{2I} -\frac{1}{k_1} \log(\cosh(\vartheta \hat x_2)) 
}
\endsubequ
with $k_1>0$ a {\em tuning} parameter,  ensures boundedness of all signals and
\begin{equation}
\lab{exptx2}
\lim_{t\to \infty}  [\hat x_2(t) - x_2(t)]=0
\end{equation}
for all initial conditions $(x_1(0), x_2(0),x_{2I}(0),\theta_{1I}(0),\theta_{2I}(0))$.
\end{proposition}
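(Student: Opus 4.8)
The plan is to pass to error coordinates and exhibit a Lyapunov function whose derivative is rendered negative semidefinite by the I\&I construction. First I would introduce the velocity error $\tilde x_2 := \hat x_2 - x_2$ and the parameter errors $\tilde\theta_1 := \hat\theta_1 - \theta_1$, $\tilde\theta_2 := \hat\theta_2 - \theta_2$. Differentiating \eqref{hx2} and using $\dot x_1 = x_2$ gives $\dot{\hat x}_2 = \dot x_{2I} + k_1 x_2$, so that substituting \eqref{dotx2I} and the plant \eqref{sys1}, then writing $\hat\theta_i = \theta_i + \tilde\theta_i$ and $x_2 = \hat x_2 - \tilde x_2$ to expose the regressors $\hat x_2$ and $\tanh(\vartheta\hat x_2)$, yields the velocity error dynamics
\begin{equation}
\dot{\tilde x}_2 = -(k_1+\theta_1)\tilde x_2 - \theta_2[\tanh(\vartheta\hat x_2) - \tanh(\vartheta x_2)] - \tilde\theta_1\hat x_2 - \tilde\theta_2\tanh(\vartheta\hat x_2).
\end{equation}

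The crucial step---and the essence of the I\&I design---is to show that the off-manifold coordinates $\tilde\theta_1,\tilde\theta_2$ obey clean dynamics driven only by $\tilde x_2$. From \eqref{dotx2I} one computes $\dot x_{2I} + k_1\hat x_2 = -\hat\theta_1\hat x_2 - \hat\theta_2\tanh(\vartheta\hat x_2) + u$, while the relation $\dot{\hat x}_2 = \dot x_{2I} + k_1 x_2$ gives the key identity $\dot x_{2I} + k_1\hat x_2 = \dot{\hat x}_2 + k_1\tilde x_2$. Differentiating the algebraic maps \eqref{hr1}, \eqref{hr2}, using $\tfrac{d}{ds}\log\cosh s = \tanh s$ and $\dot\theta_1 = \dot\theta_2 = 0$, the gradient terms $-\tfrac{\vartheta}{k_1}\hat x_2\dot{\hat x}_2$ and $-\tfrac{\vartheta}{k_1}\tanh(\vartheta\hat x_2)\dot{\hat x}_2$ cancel the $\dot{\hat x}_2$ part of \eqref{dotr1I}, \eqref{dotr2I}, leaving
\begin{equation}
\dot{\tilde\theta}_1 = \vartheta\,\hat x_2\,\tilde x_2, \qquad \dot{\tilde\theta}_2 = \vartheta\,\tanh(\vartheta\hat x_2)\,\tilde x_2.
\end{equation}
This is exactly why the auxiliary maps $-\tfrac{\vartheta}{2k_1}\hat x_2^2$ and $-\tfrac{1}{k_1}\log\cosh(\vartheta\hat x_2)$ are chosen: their gradients reproduce the regressors. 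I expect this computation to be the main obstacle---not because it is hard, but because it is where the whole construction stands or falls, so every sign and every factor $\vartheta/k_1$ must line up.

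With the error system in hand I would take $V = \tfrac{1}{2}\tilde x_2^2 + \tfrac{1}{2\vartheta}\tilde\theta_1^2 + \tfrac{1}{2\vartheta}\tilde\theta_2^2$. The cross terms $-\tilde\theta_1\hat x_2\tilde x_2 - \tilde\theta_2\tanh(\vartheta\hat x_2)\tilde x_2$ in $\tilde x_2\dot{\tilde x}_2$ are cancelled precisely by $\tfrac{1}{\vartheta}\tilde\theta_1\dot{\tilde\theta}_1 + \tfrac{1}{\vartheta}\tilde\theta_2\dot{\tilde\theta}_2$, so that
\begin{equation}
\dot V = -(k_1+\theta_1)\tilde x_2^2 - \theta_2\,\tilde x_2[\tanh(\vartheta\hat x_2) - \tanh(\vartheta x_2)].
\end{equation}
Introducing $\phi(s) := (k_1+\theta_1)s + \theta_2\tanh(\vartheta s)$, which is strongly monotone in the sense of Definition \ref{def1} with $c = k_1+\theta_1$ since $\phi'(s) \geq k_1+\theta_1 > 0$, the right-hand side equals $-\tilde x_2[\phi(\hat x_2) - \phi(x_2)] \leq -k_1\tilde x_2^2 \leq 0$. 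Hence $V$ is nonincreasing, which bounds $\tilde x_2,\tilde\theta_1,\tilde\theta_2$; since the plant state is bounded by hypothesis, $\hat x_2 = x_2+\tilde x_2$, the estimates $\hat\theta_1,\hat\theta_2$, and in turn the integrator states $x_{2I},\theta_{1I},\theta_{2I}$ are all bounded, which establishes the first claim.

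Finally, to obtain \eqref{exptx2} I would invoke Barbalat's lemma. Integrating $\dot V \leq -k_1\tilde x_2^2$ shows $\tilde x_2 \in \mathcal{L}_2$; and since every signal on the right-hand side of the $\tilde x_2$ dynamics has just been shown bounded, $\dot{\tilde x}_2 \in \mathcal{L}_\infty$, so $\tilde x_2$ is uniformly continuous. A square-integrable, uniformly continuous signal tends to zero, giving $\lim_{t\to\infty}[\hat x_2(t) - x_2(t)] = 0$ for arbitrary initial conditions. I would stress that convergence of the parameter errors is \emph{not} claimed---and is not expected absent persistency of excitation---consistent with the separate excitation assumption invoked later for the tracking result.
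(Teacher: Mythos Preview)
Your proposal is correct and follows essentially the same route as the paper: derive the error dynamics for $(\tilde x_2,\tilde\theta_1,\tilde\theta_2)$ from the I\&I construction, use a quadratic Lyapunov function (yours is the paper's $\mathcal H$ scaled by $1/\vartheta$), exploit the monotonicity of $\tanh$ to obtain $\dot V\le -(k_1+\theta_1)\tilde x_2^2$, and conclude via $\tilde x_2\in\mathcal L_2\cap\mathcal L_\infty$ and Barbalat/LaSalle--Yoshizawa. If anything, your handling of the monotonicity step---using only $\tanh'\ge 0$ rather than strong monotonicity of $\tanh$---is cleaner than the paper's invocation of Definition~\ref{def1}, since $\tanh$ is merely monotone, not strongly monotone, on all of $\mathbb R$.
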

%%%%%%%
\begin{proof}
\label{proof1}
Let the observation and parameter estimation errors be defined as
\begequ
\lab{ex2}
\tilde x_2 := \hat x_2 -x_2, \quad \tilde \theta_1:= \hat \theta_1 -\theta_1, \quad \tilde \theta_2:= \hat \theta_2 -\theta_2.
\endequ
According to the I\&I methodology \cite{ASTKARORTbook} generate the estimates of the unknown state and parameters as the sum of a proportional and an
integral term, that is,
$$
\hat x_2= x_{2I}+x_{2P}(x_1), \quad \hat \theta_1= \theta_{1I} +\theta_{1P}(\hat x_2),  \quad \hat \theta_2= \theta_{2I} +\theta_{2P}(\hat x_2)
$$
where the mappings $x_{2P}$, $\theta_{1P}$, $\theta_{2P}$ and the dynamics of the observer states $x_{2I}$, $\theta_{1I}$ and $\theta_{2I}$ will be defined below.
\\
First, we study the dynamic behavior of  $\tilde x_2$ and compute   
\begin{align*}
\dot {\tilde x}_2 =& {\dot x}_{2I} +x'_{2P}(x_1) x_2 -{\dot x}_2  \\
=& {\dot x}_{2I} +x'_{2P}(x_1) (\hat x_2- \tilde x_2) - [ u -\theta_1 (\hat x_2- \tilde x_2) -\theta_2 \tanh (\vartheta x_2)]  \\
=& {\dot x}_{2I} +x'_{2P}(x_1) (\hat x_2- \tilde x_2) -  u +(\hat \theta_1-\tilde \theta_1) \hat x_2\\
&\;- \theta_1\tilde x_2 +\theta_2 \tanh (\vartheta x_2).
\end{align*}
Choosing the mapping  $x_{2P}(x_1)=k_1 x_1$  we obtain
\begin{align}
\dot{\tilde x}_2=&  -(k_1+\theta_1) \tilde x_2-\tilde \theta_1 \hat x_2  +\theta_2\tanh(\vartheta x_2)-\hat \theta_2 \tanh(\vartheta \hat x_2) \nonumber \\
=& -\gamma_1 \tilde x_2-\tilde \theta_1 \hat x_2  +\theta_2\tanh(\vartheta x_2)-(\tilde \theta_2 +\theta_2) \tanh(\vartheta \hat x_2) \nonumber \\
=& -\gamma_1 \tilde x_2-\tilde \theta_1 \hat x_2  -\theta_2 [\tanh(\vartheta \hat x_2)- \tanh(\vartheta x_2)] \nonumber \\
&-\tilde \theta_2 \tanh(\vartheta \hat x_2)
\lab{dex2}
\end{align}
where we used the expression of ${\dot x}_{2I}$ given in \eqref{dotx2I} to get the first identity and we defined 
\begequ
\lab{gam1}
\gamma_1:=k_1+\theta_1.
\endequ 
On the other hand, the time derivative of  $\tilde \theta_1$ is given by
\begin{eqnarray*}
\dot{\tilde \theta}_1&=& \dot \theta_{1I} +  \theta'_{1P}(\hat x_2) \dot {\hat x}_2 \nonumber \\
&=& \dot \theta_{1I} +  \theta'_{1P}(\hat x_2) (\dot x_{2I} +k_1 \dot x_1) \nonumber \\
&=& \dot \theta_{1I} +  \theta'_{1P}(\hat x_2) [\dot x_{2I} +k_1 (\hat x_2-\tilde x_2)].
\end{eqnarray*}
Now, choosing 
\begin{equation}
\label{dr1Ia}
\dot \theta_{1I}=- \theta'_{1P}(\hat x_2) [\dot x_{2I} +k_1\hat x_2],
\end{equation}
yields 
\begin{equation}
\label{dtr1}
\dot{\tilde \theta}_1=-k_1  \theta'_{1P}(\hat x_2) \tilde x_2.
\end{equation}

Finally, computing the time derivative of $\tilde \theta_2$, we get
\begin{eqnarray*}
\dot{\tilde \theta}_2&=& \dot \theta_{2I} +  \theta'_{1P}(\hat x_2) \dot {\hat x}_2 \\
&=& \dot \theta_{2I} +  \theta'_{2P}(\hat x_2) (\dot x_{2I} +k_1 \dot x_1)  \\
&=& \dot \theta_{2I} +  \theta'_{2P}(\hat x_2) [\dot x_{2I} +k_1 (\hat x_2-\tilde x_2)],
\end{eqnarray*}
with 
\begin{equation}
\label{dr2Ia}
\dot \theta_{2I}=- \theta'_{2P}(\hat x_2) [\dot x_{2I} +k_1\hat x_2],
\end{equation}
we have that 
\begin{equation}
\label{dtr2}
\dot{\tilde \theta}_2=-k_1  \theta'_{2P}(\hat x_2) \tilde x_2.
\end{equation}

We will now analyze the stability of the error dynamics  \eqref{dex2}, \eqref{dtr1} and \eqref{dtr2} with the Lyapunov function candidate
\begin{equation}
\label{lya}
{\mathcal H}(\tilde x_2, \tilde \theta_1, \tilde \theta_2)=\frac{1}{2} (\vartheta {\tilde x}^2_2 +\tilde \theta^2_1 +\tilde \theta^2_2).
\end{equation}
 Taking its time derivative we obtain  
\begin{align*}
\dot {\mathcal H}= &\; - \gamma_1\vartheta \tilde x^2_2 -\vartheta \tilde x_2 \tilde \theta_1 \hat x_2 -\vartheta \tilde x_2 \tilde \theta_2 \tanh(\vartheta \hat x_2) \\
& -\theta_2 (\vartheta \hat x_2-\vartheta x_2) [\tanh(\vartheta \hat x_2) - \tanh(\vartheta x_2)] \\
&\; -k_1 \theta'_{1P}(\hat x_2) \tilde \theta_1 \tilde x_2  -k_1 \theta'_{2P}(\hat x_2) \tilde \theta_2 \tilde x_2.
\end{align*}
Clearly, if the mappings $\theta_{1P}(\hat x_2)$  and $\theta_{2P}(\hat x_2)$ solve the ordinary differential equations (ODEs)
\begin{equation}
\label{pdes}
 \theta'_{1P}(\hat x_2)=-\frac{\vartheta}{k_1} \hat x_2, \quad  \theta'_{2P}(\hat x_2)=-\frac{\vartheta}{k_1}\tanh(\vartheta \hat x_2)
\end{equation}
one obtains 
\begin{align}
\dot {\mathcal H} &= \; - \gamma_1\vartheta \tilde x^2_2  -\theta_2 (\vartheta \hat x_2-\vartheta x_2) [\tanh(\vartheta \hat x_2) - \tanh(\vartheta x_2)] \nonumber \\
&\leq  -\vartheta^2( \gamma_1 + \theta_2 \vartheta)\tilde x^2_2,
\label{dotHf}
\end{align}
where to get the last inequality we have invoked Definition \ref{def1}. We notice that the ODEs \eqref{pdes}  are solved with 
\begin{equation}
\label{solm}
\theta_{1P}(\hat x_2)= -\frac{\vartheta}{2k_1} \hat x^2_2, \quad \theta_{2P}=-\frac{1}{k_1}\log(\cosh(\vartheta \hat x_2)).
\end{equation}
Thus, the functions \eqref{dr1Ia} and \eqref{dr2Ia} correspond to \eqref{dotr1I} and \eqref{dotr2I}, respectively.  

From \eqref{lya} and \eqref{dotHf} we conclude that $\tilde x_2 \in {\mathcal L}_2 \cap {\mathcal L}_\infty$, $\tilde \theta_1\in {\mathcal L}_\infty$ and $\tilde \theta_2\in {\mathcal L}_\infty$. Invoking \cite[Theorem 8.4]{KHAbook} we conclude that \eqref{exptx2} holds---completing the proof.
\end{proof}
%%
%%
%%%%%%%%%%%%%%%
\section{Adaptive Friction Compensator}
\lab{sec3}
%%%%%%%%%%%%%%%5
%
In this section we are interested in proposing a position-feedback controller that, applied to system \eqref{sys1}, ensures {\em global tracking} of the reference signal $r(t) \in \rea$, whose first and second time derivative $\dot r(t),\ddot r(t) \in \rea$, respectively, are known and all of them are {\em bounded}. As is well known \cite{ASTKARORTbook,LANbook,SASBODbook}, global tracking in adaptive systems, usually requires some kind of excitation condition imposed on the reference signal. In this section we identify the excitation assumption needed to achieve global tracking of our adaptive system and state the second main result of the paper.
\subsection{Proposed tracking controller}
\lab{subsec31}
%%%%%%%%%%%%%%%5
%
We aim at achieving a closed-loop dynamics of the form
\begalis{
\dot e_1 &= e_2 +\et\\
\dot e_2 &= -\alpha_1 e_1 - \alpha_2 e_2 + \et,
}
where $e_1:=x_1-r,e_2:=x_2-\dot r$, $\alpha_i,i=1,2,$ are positive constants and $\et \in \rea$ is a generic symbol for a signal {\em decaying to zero}. It is easy to see that, if $x_2$ is {\em measurable} and the friction parameters are {\em known}, the ideal control law
\begequ
\lab{uid}
u^\star=\theta_1 x_2 + \theta_2 \tanh(\vartheta x_2)+\ddot r -\alpha_1 e_1 - \alpha_2 (x_2 - \dot r),
\endequ
achieves this objective with $\et \equiv 0$. To achieve an implementable control, we propose to replace the estimated speed and parameters reported in Proposition \ref{pro1}---in a certainty-equivalent manner---in \eqref{uid}. That is, we propose the adaptive control law
\begequ
\lab{u}
u=\hat \theta_1 \hat x_2 + \hat  \theta_2 \tanh(\vartheta \hat x_2)+\ddot r -\alpha_1 e_1 - \alpha_2 (\hat x_2 - \dot r).
\endequ 
It is easy to see that replacing the estimated quantities $\hat{(\cdot)}$ by $\tilde{(\cdot)}+(\cdot)$ we obtain
$$
u=u^\star+\et,
$$
where
\begin{align}
\lab{epst}
\et:= &\theta_1 \tilde x_2 + \tilde \theta_1(x_2+\tilde x_2)+ \tilde  \theta_2 \tanh(\vartheta (x_2+\tilde x_2)) \nonumber \\
&+ \theta_2 [\tanh(\vartheta (x_2+\tilde x_2)) - \tanh(\vartheta x_2)]+\alpha_2 \tilde x_2.
\end{align}
It is clear from \eqref{epst} that if the estimated parameters $\hat \theta_i$ converge to their true values we have that $\liminf \et(t)=0$---achieving the control objective. Unfortunately, from the analysis of Proposition \ref{pro1} we cannot establish parameter convergence, without imposing some kind of {\em excitation condition}---which is articulated in the following subsection.
\subsection{Excitation condition and proof of global tracking}
\lab{subsec32}
%%%%%%%%%%%%%%%5
%
It will be shown below that a sufficient condition to ensure the proposed adaptive controller is globally convergent is the following.
%%%%%
\begin{assumption}
\lab{ass1}
Consider the speed observer of Proposition \ref{pro1}, with the input signal given by \eqref{u}. Define the vector signal
\begequ
\lab{phi}
\phi(t):=  \left[ \begin{array}{c} \hat x_2(t) \\ \tanh(\vartheta \hat x_2(t)) \end{array} \right].
\endequ
The reference signal $r(t)$ is such that the following condition holds true. There exist sequences of positive numbers $\{t_{k}\}$, $\{T_{k}\}$, and $\{\lambda_{k}\}$ such that 
$$
t_{k+1}\ge t_{k}+T_{k}
$$
for $k=1,2,\dots$, $\inf\{T_{k}\}>0$, $\sup\{T_{k}\}<\infty$, and 
\begequ
\lab{newpe}
\int_{t_{k}}^{t_{k}+T_{k}} \phi(s)\phi^{\top}(s) ds \ge \lambda_{k} I_2
\endequ
where
\begequ
\lab{newpe1}
 \sum_{k=1}^{\infty} \lambda_{k}^{2} = \infty.
\endequ
\end{assumption}

Equipped with {\bf Assumption 1} we are in position to state the following {\em global tracking} proposition.
%%%
\begin{proposition} 
\lab{pro2}
Consider the system \eqref{sys1} in closed-loop with the control \eqref{u} where the estimated speed $\hat x_2$ and parameters $\hat \theta_i,\;i=1,2$, are generated by the adaptive speed observer of Proposition \ref{pro1}. If {\bf Assumption \ref{ass1}} holds true the {\em global tracking objective} is achieved. More precisely, we have that $\liminf \tilde \theta_i(t)=0,\;i=1,2$, consequently ensuring
$$
\liminf \begmat{e_1(t) \\ e_2(t)}={\bf 0}_{2 \times 1}
$$
where ${\bf 0}_{p \times q}$ is a $p \times q$ matrix of zeros. 
\end{proposition}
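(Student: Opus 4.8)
The plan is to hang everything on the \emph{observer error dynamics}, which the proof of Proposition~\ref{pro1} has already analysed. The key preliminary observation is that the energy $\mathcal{H}$ of \eqref{lya} is non-increasing \emph{along the closed loop as well}: since the input $u$ cancels identically when forming $\dot{\tilde x}_2$ (this is why \eqref{dex2} carries no $u$), the bound \eqref{dotHf} gives $\dot{\mathcal H}\le -c\,\tilde x_2^2\le 0$ for some $c>0$ \emph{irrespective of} $u$. Hence, unconditionally, $\tilde x_2\in\mathcal{L}_2\cap\mathcal{L}_\infty$, $\tilde\theta_i\in\mathcal{L}_\infty$, and $\mathcal{H}(t)\to\mathcal{H}_\infty\ge 0$. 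Two things then remain to be shown: (i) boundedness of the plant state together with the tracking claim $e\to 0$, and (ii) the parameter claim $\tilde\theta\to 0$ under Assumption~\ref{ass1}.

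First I would dispose of boundedness and tracking, which at first sight looks like the dangerous part: in the coordinates $(e_1,e_2)$ the perturbation \eqref{epst} contains $\tilde\theta_1\hat x_2=\tilde\theta_1 e_2+(\text{bounded})$, i.e. an apparently destabilizing feedback $-(\alpha_2-\tilde\theta_1)e_2$ with $\tilde\theta_1$ only known to be bounded. I would remove this difficulty by passing from the true velocity error $e_2$ to the \emph{estimated} one $\hat e_2:=\hat x_2-\dot r$. Differentiating \eqref{hx2}, substituting $\dot x_{2I}$ from \eqref{dotx2I} and the control \eqref{u}, the certainty-equivalent friction terms $\hat\theta_1\hat x_2+\hat\theta_2\tanh(\vartheta\hat x_2)$ cancel identically, leaving the Hurwitz LTI system
\begin{equation*}
\dot e_1 = \hat e_2-\tilde x_2,\qquad \dot{\hat e}_2=-\alpha_1 e_1-\alpha_2\hat e_2-k_1\tilde x_2,
\end{equation*}
driven \emph{only} by $\tilde x_2$. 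As $\tilde x_2\in\mathcal{L}_\infty$, $(e_1,\hat e_2)$ is bounded, hence $x_2=\hat x_2-\tilde x_2$ and $x_1=e_1+r$ are bounded; the hypothesis of Proposition~\ref{pro1} is now met, so $\tilde x_2\to 0$. A vanishing input into a Hurwitz system yields $(e_1,\hat e_2)\to 0$, whence $e_2=\hat e_2-\tilde x_2\to 0$. This already secures the global tracking objective, and—worth noting—it uses only $\tilde x_2\to 0$, not the excitation hypothesis.

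The substantive step is $\tilde\theta\to 0$, where Assumption~\ref{ass1} enters. Since the state is now bounded, $\phi$, $\dot{\hat x}_2$, $\dot{\tilde x}_2$ are bounded, and from \eqref{dex2} one has $\phi^{\top}\tilde\theta=-\big(\dot{\tilde x}_2+\gamma_1\tilde x_2+\theta_2[\tanh(\vartheta\hat x_2)-\tanh(\vartheta x_2)]\big)$, whose right-hand side is entirely governed by $\tilde x_2\in\mathcal{L}_2$. On each window $I_k:=[t_k,t_k+T_k]$—which are disjoint, so $\sum_k\int_{I_k}\tilde x_2^2\le\|\tilde x_2\|_{\mathcal{L}_2}^2<\infty$—I would combine the lower bound \eqref{newpe} with a swapping-type estimate (integrate $\int_{I_k}\phi\,\dot{\tilde x}_2$ by parts and use $\dot{\tilde\theta}=\vartheta\phi\tilde x_2$ to control the variation of $\tilde\theta$ over $I_k$) to obtain an inequality $\lambda_k|\tilde\theta(t_k)|^2\le C\,\eta_k$, with $\eta_k\to 0$ built from $\|\tilde x_2\|_{\mathcal{L}_2(I_k)}$ and endpoint values of $\tilde x_2$. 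Because $|\tilde\theta(t)|^2=2\mathcal{H}(t)-\vartheta\tilde x_2^2\to 2\mathcal{H}_\infty=:m^2$, a nonzero $m$ would force $\lambda_k\lesssim\eta_k$, which, played against $\sum_k\lambda_k^2=\infty$ in \eqref{newpe1}, gives a contradiction; hence $m=0$, i.e. $\tilde\theta\to 0$. With $\tilde x_2\to 0$, $\tilde\theta\to 0$ and $\phi$ bounded, \eqref{epst} yields $\varepsilon_t\to 0$, re-confirming $e\to 0$ along the route asserted in the statement.

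I expect this last paragraph to be the main obstacle. The boundedness that looked threatening is dissolved for free by the $\hat e_2$ coordinate; the real work is the excitation lemma. The two delicate points are: (a) the swapping boundary terms $\tilde x_2(t_k)$—pointwise samples of an $\mathcal{L}_2$ signal—which must be dominated using boundedness of $\dot{\tilde x}_2$ and the minimal window length $\inf\{T_k\}>0$ so that the resulting $\eta_k$ is genuinely incompatible with $\sum\lambda_k^2=\infty$; and (b) converting the interval bound \eqref{newpe} into a per-window contraction of $|\tilde\theta|^2$ with factor of order $\lambda_k^2$, which is precisely why the hypothesis is phrased with $\lambda_k^2$ rather than $\lambda_k$. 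I would isolate this as a standalone convergence lemma for the regressor $\phi$ under the weak excitation condition of Assumption~\ref{ass1}.
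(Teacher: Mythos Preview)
Your proposal is correct and, on the boundedness/tracking side, is more thorough than the paper's own argument. The paper simply rewrites the observer errors as the perturbed cascade \eqref{Xpe}, notes that $\sigma(t)\to 0$ by Proposition~\ref{pro1}, and then cites Proposition~1 of \cite{BARORT} verbatim for global asymptotic stability of the unforced part under exactly the window condition \eqref{newpe}--\eqref{newpe1}; no separate closed-loop boundedness argument is given, and the circularity that Proposition~\ref{pro1} \emph{assumes} a bounded state while Proposition~\ref{pro2} must \emph{establish} it under the feedback \eqref{u} is not addressed. Your change of coordinates to $\hat e_2=\hat x_2-\dot r$ closes this gap cleanly and shows, as a bonus, that $e\to 0$ already follows from $\tilde x_2\to 0$ alone---tracking does not actually require Assumption~\ref{ass1}, a fact the paper does not state.

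For $\tilde\theta\to 0$ the two routes are in the same spirit but packaged differently: the paper outsources the result to \cite{BARORT}, while you sketch a direct swapping/energy-on-windows proof working on the perturbed system. Your delicate points (a)--(b) are precisely the content of the \cite{BARORT} argument; carrying them through would amount to reproving that lemma. So the trade-off is a two-line citation versus a self-contained but nontrivial technical lemma that you have correctly located as the real obstacle.
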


\begin{proof}
Notice that \eqref{dex2}, \eqref{dtr1} and \eqref{dtr2} together with \eqref{pdes} can be rewritten as   
\begin{align}
\dot \chi=& -\left[ \begin{array}{ccc}
\gamma_1 & \hat x_2 & \tanh(\vartheta \hat x_2) \\
-{\vartheta}\hat x_2 & 0 & 0 \\ -{\vartheta}\tanh(\vartheta \hat x_2) & 0 & 0
 \end{array} \right]\chi  \nonumber\\ 
 &-\left[ \begin{array}{c}  \theta_2 [\tanh(\vartheta \hat x_2)- \tanh(\vartheta x_2)]  \\0 \\ 0 \end{array} \right] \nonumber \\
 =& -\left[ \begin{array}{cc}
\gamma_1 & \phi^\top(t) \\
-{\vartheta}\phi(t) & {\bf 0}_{2 \times 2} 
 \end{array} \right]\chi -  \left[ \begin{array}{c}  \theta_2 \sigma (t)   \\ {\bf 0}_{2 \times 1} \end{array} \right]
 \label{Xpe}
\end{align}
with  $\chi:=\col (\tilde x_2, \tilde \theta_1, \tilde \theta_2)$,
$$
\sigma (t):= \tanh(\vartheta \hat x_2(t))- \tanh(\vartheta x_2(t))
$$
and the vector $\phi(t)$ defined in \eqref{phi}. The {\em unforced} part of the system \eqref{Xpe}---i.e., {\em with} $\sigma (t)=0$---arises in model reference adaptive control of linear time-invariant systems and it has widely been studied in the adaptive systems literature \cite{ASTKARORTbook,LANbook,SASBODbook}. In particular, it has been known for over 40 years that a necessary and sufficient condition for the linear time-varying unforced system to be {\em globally exponentially stable} is that $\phi(t)$ is {\it persistently exciting} (PE), that is
\begequ
\lab{pe}
\int^{t+T}_{t} \phi(s) \phi^\top(s) ds \geq \mu I_2
\endequ
for some $T>0$ and $\mu >0$ and all $t \geq 0$. It is well-known that PE is a {\em very restrictive} assumption, for instance, it is not satisfied for constant references. On the other hand, we note that the perturbing term $\sigma(t)$ is bounded and converges to zero if $\tilde x_2(t) \to 0$---a property that is ensured by Proposition \ref{pro1}. Consequently, to ensure $\liminf \chi(t)=0$ it is {\em sufficient} to ensure global {\em asymptotic} stability of the unforced part of the system \eqref{Xpe}. Conditions to ensure this (admittedly weaker) property have been derived in \cite{BARORT}. In particular Proposition 1 of  \cite{BARORT} {\em exactly} coincides with Proposition \ref{pro2} above. Since the proof of this result is given in  \cite{BARORT}, the proof of our claim is completed.
\end{proof}
\subsection{Discussion about Assumption \ref{ass1}}
\lab{subsec33}
%%%%%%%%%%%%%%%5
%
Admittedly, the condition of {\bf Assumption \ref{ass1}} imposed on the vector $\phi(t)$ is rather cryptic. To gain some intuition in the connection between this condition and the standard PE condition \eqref{pe} let us consider instead of \eqref{newpe} and \eqref{newpe1} the more conservative conditions
\begequ
\lab{newpecon}
\lambda_{\min}\left\{\int_{t_{k}}^{t_{k+1}} \phi(t)\phi^\top (t) dt\right\} \ge \mu_{k}
\endequ
and
\begequ
\lab{nonsumcon}
\sum_{k=1}^{\infty} \frac{\mu_{k}}{1+ \|\phi(t)\|_\infty^4 \tilde T_{k}} = \infty,
\endequ
respectively, where $\|\cdot\|_\infty$ is the ${\cal L}_\infty$ norm. First, notice that  \eqref{pe} implies 
$$
\lambda_{\min}\left\{\int_{t}^{t+T} \phi(t)\phi^\top (t) dt\right\} \ge \mu,
$$
which compared with \eqref{newpecon} reveals two substantial differences. 
\begenu
\item[(i)] The integration window is not fixed (to $T$) but is now time-varying $[t_k,t_{k+1}]$.
\item[(ii)] The ``excitation level" $\mu$ is also time varying, but it should satisfy the non-summability condition \eqref{nonsumcon}.
\endenu

Additional remarks on the interpretations and implications of {\bf Assumption \ref{ass1}}, as well as a discussion on the {\em necessity} of the assumption for global asymptotic stability, may be found in \cite{BARORT}.
%
%%%%%%%%%%%%%%%%%
\section{Application to a Hydro-Mechanical System}
\lab{sec4}
%%%%%%%%%%%%%%%%%
%
To show how the adaptive speed observer proposed in Proposition \ref{pro1} can be used in scenarios different from the simple mechanical system \eqref{sys1}, in this section we develop a slight variation of this observer for the linearized hydro-mechanical system considered in \cite{ESTetal}. 

The dynamic model describing the system has the form\footnote{We have neglected the presence of an additive bounded Lipschitz function term $\delta_3(t)$, which represents a perturbation  in the  pressure load $x_3$, in \eqref{dx3h}.}
\begsubequ
		\lab{sys2}
		\begali{
			\lab{dx1h}
			\dot x_1=&\;x_2 \\
	\lab{dx2h}
	\dot x_2=& a_1 x_3 +\delta(t,x_2)\\	
	\lab{dx3h}	
	\dot x_3=& \; - a_2 x_2 - a_3 x_3 + u \\
	y = & \; x_1,
}
\endsubequ
where $x:=\col(x_1,x_2,x_3)$ with $x_1$ and $x_2$ the linear displacement and velocity of the cylinder, respectively; $x_3$ represents the differential load pressure, $\delta(t, x_2)$ is the nonlinear {\em friction force} and the positive parameters $a_i,i=1,2,3$ are assumed to be {\em known}. For the controller designed in \cite{ESTetal} no structure is given to this force and it is simply ``assumed" to be a vanishing Lipschitz function perturbation.\footnote{This assumption implies that the acceleration is ``assumed" to be bounded, as indicated in Section \ref{sec1}.}

To apply the result of Proposition \ref{pro1} we assume $\delta$ represents the sum of the stiction and Coulomb frictions, consequently
$$
\delta(x_2)=-\theta_1 x_2 -\theta_2 \tanh(\vartheta x_2).
$$ 
This leads to the following system representation 
\begali{
\nonumber
\dot x_1 &=x_2 \\
\nonumber
\dot x_2&=- \theta_1 x_2 -\theta_2 \tanh(\vartheta x_2) +a_1 x_3 \\
\dot x_3&= -a_2 x_2 - a_3 x_3 +u,
\label{sys3}
}
with $\theta_1$ and $\theta_2$ {\em unknown} and $a_i,i=1,2,3$ {\em known} positive constants.

In the proposition below we present a slight modification of the adaptive observer of Proposition \ref{pro1}, which is applicable to system \eqref{sys3}. To streamline the proposition we need the following.

\begin{assumption}
\lab{ass2} 
An upperbound on the parameter $\theta_2$ of the friction model is {\em known}.
\end{assumption}  
\begin{proposition}
\lab{pro3}
%%%%%%%%%
Consider system \eqref{sys3} and assume $u$ is such that the state remains bounded.  Consider the I\&I adaptive velocity observer  \eqref{sys0} with
\eqref{dotx2I} replaced by
$$
\dot x_{2I}= a_1 \hat x_3 -(\hat \theta_1 +k_1) \hat x_2- \hat \theta_2 \tanh (\vartheta \hat x_2)
$$  
with the new state equation given by
$$
	 \dot {\hat x}_3= \; -a_2 \hat x_2 -a_3\hat x_3 +u. 
$$
If {\bf Assumption \ref{ass2}} holds we can {\em compute} a positive constant $k_1^{\min}$ and {\em select} $k_1$ such that, 
\begequ
\lab{k1k1min}
k_1 \geq k_1^{\min}.
\endequ
Under these conditions we ensure boundedness of all signals and
$$
\lim_{t\to \infty}  \Big| \begmat{\hat x_2(t) - x_2(t) \\ \hat x_3(t) - x_3(t)}\Big| =0,
$$
for all initial conditions $(x_1(0), x_2(0), x_3(0),x_{2I}(0),\theta_{1I}(0),\theta_{2I}(0))$.
\end{proposition}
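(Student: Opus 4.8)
The plan is to replicate the I\&I Lyapunov construction behind Proposition~\ref{pro1}, enlarged by the pressure-error coordinate $\tilde x_3 := \hat x_3 - x_3$. Keeping the error definitions \eqref{ex2} and noting that the estimator laws \eqref{dotr1I}--\eqref{hr2} and the identity $\dot{\hat x}_2 = \dot x_{2I} + k_1 x_2$ are untouched, the derivations of \eqref{dtr1} and \eqref{dtr2} go through verbatim---the parameter error dynamics are insensitive to the particular form of $\dot x_{2I}$, so that $\dot{\tilde\theta}_1 = \vartheta\hat x_2\tilde x_2$ and $\dot{\tilde\theta}_2 = \vartheta\tanh(\vartheta\hat x_2)\tilde x_2$ as before. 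Recomputing $\dot{\tilde x}_2$ with the modified $\dot x_{2I}$, the sole new contribution relative to \eqref{dex2} is the linear coupling $+a_1\tilde x_3$, while the added coordinate obeys
\begin{equation*}
\dot{\tilde x}_3 = -a_2\tilde x_2 - a_3\tilde x_3 .
\end{equation*}

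I would then study stability through the augmented Lyapunov function
\begin{equation*}
\mathcal{H} = \tfrac12\big(\vartheta\,\tilde x_2^2 + \tilde\theta_1^2 + \tilde\theta_2^2 + \tilde x_3^2\big),
\end{equation*}
whose derivative, after the two parameter cross-terms cancel exactly as in Proposition~\ref{pro1}, reads
\begin{equation*}
\dot{\mathcal H} = -\gamma_1\vartheta\,\tilde x_2^2 - \vartheta\theta_2\,\tilde x_2\big[\tanh(\vartheta\hat x_2) - \tanh(\vartheta x_2)\big] + (\vartheta a_1 - a_2)\,\tilde x_2\tilde x_3 - a_3\,\tilde x_3^2 .
\end{equation*}
The first and last terms are negative definite, and the only indefinite contribution is the hydraulic coupling $(\vartheta a_1 - a_2)\tilde x_2\tilde x_3$.

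To force $\dot{\mathcal H}$ to be negative definite in $(\tilde x_2,\tilde x_3)$ I would bound the friction term by its global Lipschitz constant, $|\tanh(\vartheta\hat x_2)-\tanh(\vartheta x_2)| \le \vartheta|\tilde x_2|$, and replace the unknown $\theta_2$ by its known upper bound $\bar\theta_2 \ge \theta_2$ from Assumption~\ref{ass2}, which gives
\begin{equation*}
\dot{\mathcal H} \le -\vartheta(\gamma_1 - \vartheta\bar\theta_2)\,\tilde x_2^2 + (\vartheta a_1 - a_2)\,\tilde x_2\tilde x_3 - a_3\,\tilde x_3^2 .
\end{equation*}
The quadratic form on the right is negative definite, by the Sylvester criterion, whenever $\gamma_1 - \vartheta\bar\theta_2 > 0$ and $\vartheta(\gamma_1-\vartheta\bar\theta_2)a_3 > \tfrac14(\vartheta a_1 - a_2)^2$. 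Since $\gamma_1 = k_1 + \theta_1 \ge k_1$ with $\theta_1>0$, both inequalities hold strictly as soon as $k_1 \ge k_1^{\min}$ with
\begin{equation*}
k_1^{\min} := \vartheta\bar\theta_2 + \frac{(\vartheta a_1 - a_2)^2}{4\vartheta a_3},
\end{equation*}
a quantity computable from the known constants $\vartheta, a_1, a_2, a_3$ and the bound $\bar\theta_2$; consequently $\dot{\mathcal H} \le -\rho(\tilde x_2^2 + \tilde x_3^2)$ for some $\rho>0$.

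From $\dot{\mathcal H}\le 0$ it follows that $\tilde x_2,\tilde x_3 \in \mathcal{L}_2\cap\mathcal{L}_\infty$ and $\tilde\theta_1,\tilde\theta_2\in\mathcal{L}_\infty$; boundedness of the state (assumed in the statement) then yields boundedness of all observer signals and, in particular, uniform continuity of $\tilde x_2$ and $\tilde x_3$ through boundedness of their derivatives, so that \cite[Theorem 8.4]{KHAbook} applied to each component delivers the claimed limit. The step I expect to be the crux is precisely the domination of the coupling term: in contrast to Proposition~\ref{pro1}, $\dot{\mathcal H}$ is no longer sign-definite for an arbitrary tuning, and the observer bandwidth $k_1$ must be traded against the friction level and the hydraulic coupling. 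Handling the $\tanh$ term through its global Lipschitz bound---rather than exploiting its monotonicity, which would render this term helpful---is what makes $\bar\theta_2$ enter $k_1^{\min}$ explicitly, and is exactly why Assumption~\ref{ass2} and the threshold \eqref{k1k1min} are imposed.
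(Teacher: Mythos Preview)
Your proof is correct and follows essentially the same I\&I Lyapunov route as the paper: derive the augmented error dynamics, cancel the parameter cross-terms via \eqref{pdes}, and dominate the remaining $(\tilde x_2,\tilde x_3)$ quadratic form by taking $k_1$ large enough. The paper differs only in cosmetic details---it puts an extra weight $\alpha_1$ on $\tilde x_3^2$ and uses Young's inequality instead of the Sylvester test---and, more notably, it keeps the $\tanh$ term on the helpful side via monotonicity rather than bounding it by the Lipschitz constant as you do. Your Lipschitz step is valid but unnecessarily conservative: since $\tilde x_2[\tanh(\vartheta\hat x_2)-\tanh(\vartheta x_2)]\ge 0$, that term is already nonpositive and could simply be dropped, which would remove $\bar\theta_2$ from your $k_1^{\min}$ altogether; so your closing rationale for Assumption~\ref{ass2} is slightly off, though the argument as written is sound.
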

%%%%%%%
\begin{proof}
Let the observation and parameter estimation errors be defined as \eqref{ex2} and $\tilde x_3 := \hat x_3 -x_3$. Mimicking the proof of Proposition \ref{pro1} and after of simple calculations we have that 
\begin{align}
\dot{\tilde x}_2= & -\gamma_1 \tilde x_2-\tilde \theta_1 \hat x_2  -\theta_2 [\tanh(\vartheta \hat x_2)- \tanh(\vartheta x_2)] \nonumber \\
&-\tilde \theta_2 \tanh(\vartheta \hat x_2) + a_1 \tilde x_3
\lab{dex2a}
\end{align}
with $\gamma_1$ given in \eqref{gam1},  dynamic errors $\tilde \theta_1$ and $\tilde \theta_2$ as \eqref{dtr1} and \eqref{dtr2}, respectively, and 
\begin{equation}
\label{derx3}
\dot {\tilde x}_3 = -a_2\tilde x_2 - a_3 \tilde x_3.
\end{equation}
In this case, the stability of the error dynamics  \eqref{dex2a}, \eqref{dtr1},  \eqref{dtr2} and \eqref{derx3}  is analyzed with the Lyapunov function candidate
\begin{equation}
\label{lyaa}
{\mathcal U}(\tilde x_2, \tilde x_3, \tilde \theta_1, \tilde \theta_2):= {\mathcal H}(\tilde x_2, \tilde \theta_1, \tilde \theta_2) +\frac{1}{2}\alpha_1 \tilde x_3,
\end{equation}
with $\calh$ given in \eqref{lya} and 
\begequ
\lab{alp1}
\alpha_1>{1 \over a_1}.
\endequ
 Taking its time derivative we obtain 
\begin{align}
\dot {\mathcal U}= &\; - \gamma_1\vartheta \tilde x^2_2 -\vartheta \tilde x_2 \tilde \theta_1 \hat x_2   -\vartheta \tilde x_2 \tilde \theta_2 \tanh(\vartheta \hat x_2) +a_1 \vartheta \tilde x_3 \tilde x_2 \nonumber \\
& -\theta_2 (\vartheta \hat x_2-\vartheta x_2) [\tanh(\vartheta \hat x_2) - \tanh(\vartheta x_2)] \nonumber \\
& -k_1 \theta'_{1P}(\hat x_2) \tilde \theta_1 \tilde x_2  -k_1 \theta'_{2P}(\hat x_2) \tilde \theta_2 \tilde x_2 -\alpha_1 a_2 \tilde x_3 \tilde x_2 - \alpha_1 a_3 \tilde x^2_3. \nonumber
\end{align}
Since the mappings $\theta_{1P}(\hat x_2)$  and $\theta_{2P}(\hat x_2)$ given by \eqref{solm} solve the ODEs \eqref {pdes}
one obtains 
\begin{align}
\dot {\mathcal H} = &  - \gamma_1\vartheta \tilde x^2_2   +a_1 \vartheta \tilde x_3 \tilde x_2  -\alpha_1 a_2 \tilde x_3 \tilde x_2 - \alpha_1 a_3 \tilde x^2_3\nonumber \\
&-\theta_2 (\vartheta \hat x_2-\vartheta x_2) [\tanh(\vartheta \hat x_2) - \tanh(\vartheta x_2)] \nonumber\\
\leq &  -\vartheta( \gamma_1 + \theta_2 \vartheta) \vartheta \tilde x^2_2 +\frac{\theta_2^2 \vartheta^2}{2} \tilde x^2_2 +\frac{1}{2} \tilde x^2_3 +\frac{\alpha^2_1 a^2_2}{2} \tilde x^2_2 \nonumber \\
& + \frac{1}{2} \tilde x^2_3 - \alpha_1 a_1 \tilde x^2_3 \nonumber \\
\leq &  -\alpha_2  \tilde x^2_2   - \alpha_3 \tilde x^2_3,
\label{dotHfb}
\end{align}
with 
\begalis{
\alpha_2 &:= \vartheta  \gamma_1 + \theta_2 \vartheta^2-\frac{\theta_2^2 \vartheta^2}{2} -\frac{\alpha^2_1 a^2_2}{2}, \quad
\alpha_3 :=\alpha_1 a_1-1,
}
where, to get the first inequality, we have invoked Definition \ref{def1}. 

Condition \eqref{alp1} ensures $\alpha_3$ is positive. On the other hand, we can see that  $\gamma_1$ contains the free parameter $k_1$, which must be selected to ensure that  $\alpha_2>0$. Towards this end, we observe that knowing an upperbound on the parameter $\theta_2$ of the friction model it is possible to determine the value of $k_{\min}$ such that the condition \eqref{k1k1min} is satisfied, ensuring $\alpha_2>0$.  

From \eqref{lyaa} and \eqref{dotHfb} we conclude that $\tilde x_2, \tilde x_3 \in {\mathcal L}_2 \cap {\mathcal L}_\infty$ and $\tilde \theta_1 \in {\mathcal L}_\infty$ and $\tilde \theta_2 \in {\mathcal L}_\infty$.  Invoking \cite[Theorem 8.4]{KHAbook} we complete the proof.  

\end{proof}

%%%%%%%%%%%%%%%
\section{Simulation Results}
\lab{sec5}
%%%%%%%%%%%%%%%5
%
  
In this section we test, via simulations, the {\em robustness} of our controller design, when our assumption of modeling the friction via stiction plus Coulomb forces is violated. Towards this end, we consider the following system 
\begin{eqnarray}
\dot x_1 &=&x_2, \nonumber\\
\dot x_2&=&- \sigma_2 x_2 -\sigma_1 \dot{z}-\sigma_0 z + u,\nonumber \\
\dot{z}&=&x_2-\frac{\sigma_0|x_2|z}{F_C+(F_S-F_C)e^{-\big(\frac{x_2}{v_S}\big)^2}},
\label{sys2}
\end{eqnarray}
where the friction force is represented now via the {\em LuGre model} as presented in \cite{CANetal,PANORTGAF}. The physical parameters of the LuGre model are also taken from \cite{CANetal} and are listed in Table 1. For the controller design it is {\em assumed} that the system is described by \eqref{sys1}, with $\vartheta=100$ and only $x_1$ and $u$ measurable. 

It is clear that the stiction force is represented in \eqref{sys2} by the term $\sigma_2 x_2$. On the other hand, it is argued in  \cite{ARMetal,CANetal} that the coefficient $F_C$---appearing in the dynamics of the auxiliary state $z$ in \eqref{sys2}---is ``related" with the coefficient $\theta_2$ of the Coulomb friction model of \eqref{sys1}. Therefore, with some abuse of notation, we will define the new parameter errors
\begequ
\lab{esterr}
\tilde \theta_1:=\hat \theta_1-\sigma_2,\;\tilde \theta_2:=\hat \theta_2-F_C.
\endequ

We select the desired trajectory $r(t)=\cos(w(t)*t)$ with $w(t)=0.01t$. The controller \eqref{u}, along with the adaptive observer \eqref{sys0}, are employed, with the target dynamics fixed to $\alpha_1=100$, $\alpha_2=100$---that is, the desired closed-loop poles at $p_1=-98$, $p_2=-10$. To see the effect of the tuning gain $k_1$, it is set to three different values, namely, 1, 3, and 7. The initial values of system states, observers and parameter estimates are set as follows: $x_1(0)=0.1$, $x_2(0)=0.5$, $z(0)=x_{2I}(0)=\theta_{1I}(0)=\theta_{2I}(0)=0$.

The simulation results are presented in Fig. 1 to Fig. 5. As shown in Fig. 1, despite the increasing frequency of the desired trajectory $r$, the state $x_1$ is able to track it effectively, demonstrating good reference tracking performance. It is also seen, form the zoom subfigure in Fig. 2(b) that, as expected from \eqref{gam1} and \eqref{dotHf}, the tracking error decreases with increasing values of the adaptation gain $k_1$. Fig. 2 shows that the estimated state $\hat{x}_2$ track almost perfectly the actual state $x_2$. Figures 3(a) and 3(b) show that the ``estimation errors" $\tilde{\theta}_1$ and $\tilde{\theta}_2$ defined in \eqref{esterr}, converge to zero, validating the interpretation given to the coefficient $F_C$ described above. This figure also shows the beneficial effect of increasing the gain $k_1$. Fig. 4 illustrates the trajectory of the actual control law $u$ and its deviation from the ideal control law $u^\star$ defined for the stiction plus Coulomb friction model in \eqref{uid}. Fig. 5 shows that the regressor $\phi$ satisfies the PE condition.

In the simulation above we have considered a ``rich" reference signal. To show that the controller design performs still well under severely limited excitation conditions, we have repeated the experiments above for the case of a reference signal consisting of an initial step plus a delayed ramp, as shown in Fig. 6 to Fig. 8. Despite the limited richness of the reference signal, the results in Figs. 6 and 7 demonstrate that the state $x_1$ is still able to effectively track the desired trajectory $r$, and the estimated state $\hat{x}_2$ tracks the actual state $x_2$ almost perfectly, further validating the effectiveness and robustness of the proposed approach. Moreover, Fig. 8 also illustrates the trajectory of the actual control input \( u \) and its deviation from the ideal control law \( u^\star \).

 \begin{table}[h]
		\centering
		\caption{Parameter values of LuGre model \cite{CANetal}}\label{Tab:lugparam}
		\centering
		\renewcommand {\arraystretch}{1}
		\begin{tabular}{|c|c|c|c|c|c|}
			\hline
			Parameter&Value & Parameter&Value & Parameter&Value\\ \hline 
			$\sigma_0$& $10^5$ & $\sigma_1$& $\sqrt{10^5}$ & 
			$\sigma_2$&0.4\\\hline 
			$F_C$& 1 & 
			$F_S$& 1.5 &
			$v_S$& 0.001\\
			\hline
		\end{tabular}
	\end{table}

\begin{figure}[htp]
\centering
\begin{subfigure}[b]{0.48\columnwidth}
    \includegraphics[width=\linewidth]{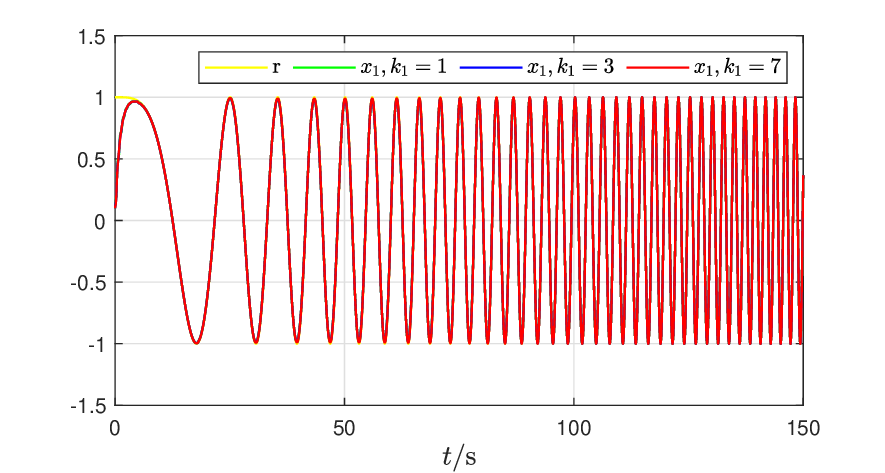}
    \caption{Comparison between $r$ and the state $x_1$}
    \label{fig1a}
\end{subfigure}
\hfill
\begin{subfigure}[b]{0.48\columnwidth}
    \includegraphics[width=\linewidth]{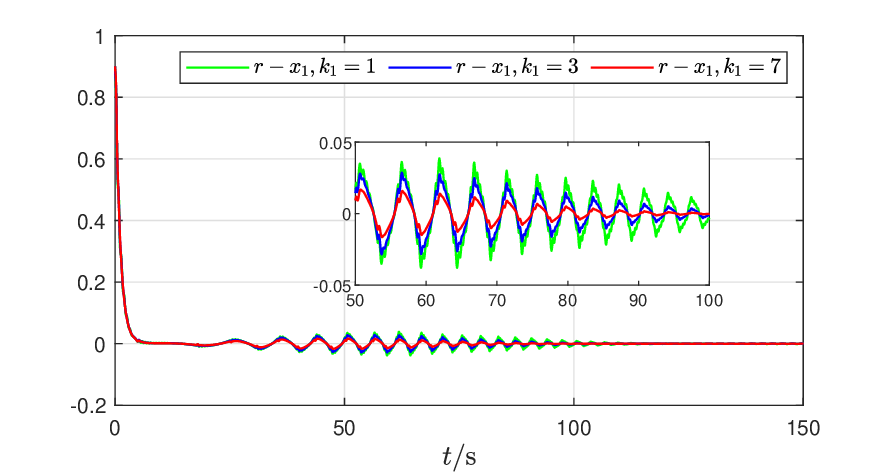}
    \caption{Trajectory of the tracking error $r - x_1$}
    \label{fig1b}
\end{subfigure}
\caption{Position tracking results under a cosine reference signal}
\label{fig1}
\end{figure}

\begin{figure}[htp]
\centering
\begin{subfigure}[b]{0.48\columnwidth}
    \includegraphics[width=\linewidth]{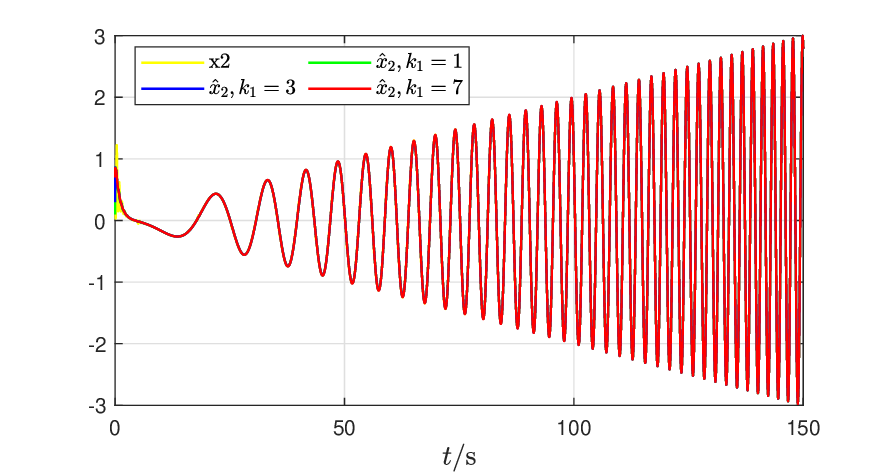}
    \caption*{(a) Comparison between $x_2$ and its estimate $\hat{x}_2$}
    \label{fig2a}
\end{subfigure}
\hfill
\begin{subfigure}[b]{0.48\columnwidth}
    \includegraphics[width=\linewidth]{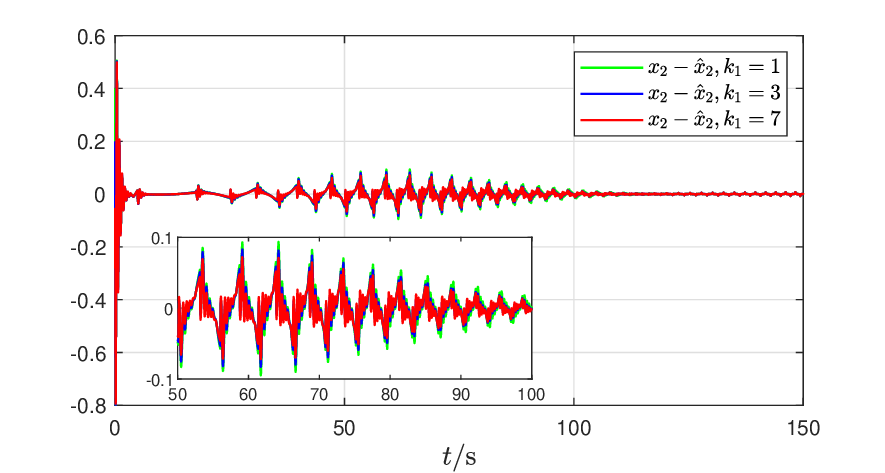}
    \caption*{(b) Trajectory of the observer error $x_2 - \hat{x}_2$}
    \label{fig2b}
\end{subfigure}
\caption{Observer results under a cosine reference signal}
\label{fig2}
\end{figure}

\begin{figure}[htp]
\centering
\begin{subfigure}[b]{0.48\columnwidth}
    \includegraphics[width=\linewidth]{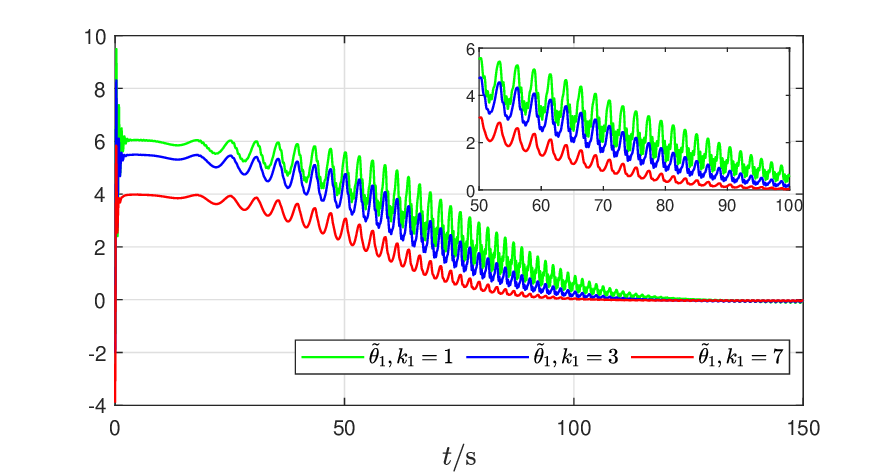}
    \caption*{(a) Trajectory of the estimation error $\tilde{\theta}_1$}
    \label{fig3a}
\end{subfigure}
\hfill
\begin{subfigure}[b]{0.48\columnwidth}
    \includegraphics[width=\linewidth]{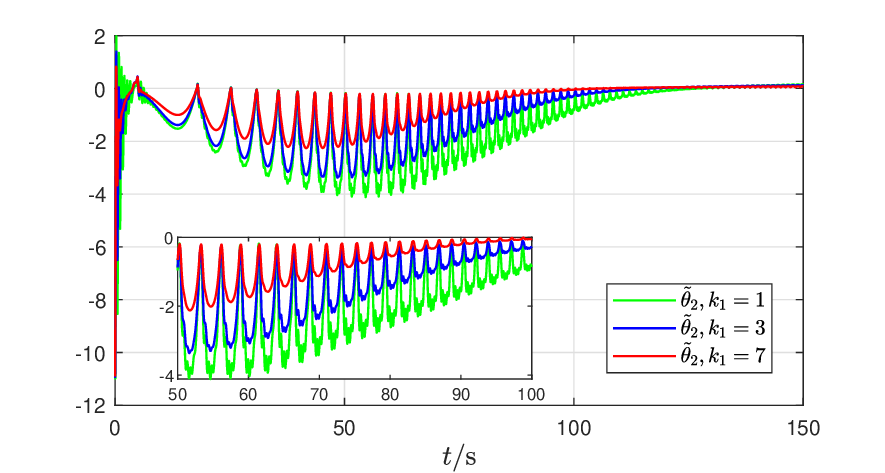}
    \caption*{(b) Trajectory of the estimation error $\tilde{\theta}_2$}
    \label{fig3b}
\end{subfigure}
\caption{Parameter estimation results under a cosine reference signal}
\label{fig3}
\end{figure}

\begin{figure}[htp]
\centering
\begin{subfigure}[b]{0.48\columnwidth}
    \includegraphics[width=\linewidth]{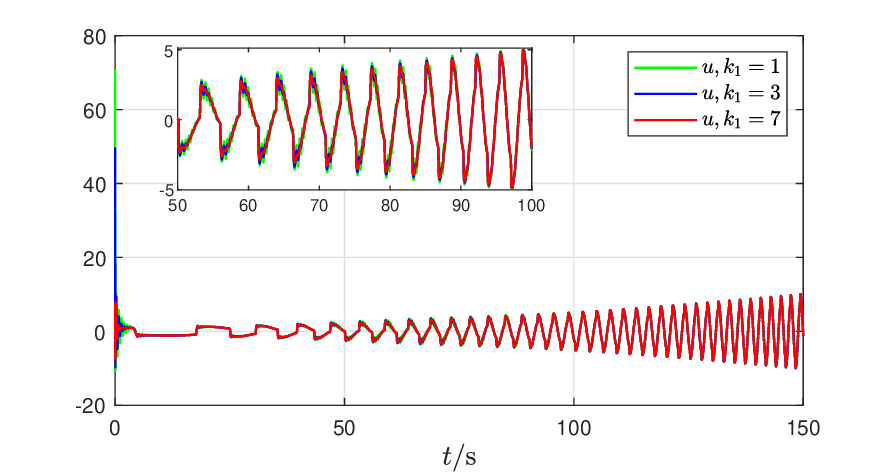}
    \caption*{(a) Trajectory of the actual control law $u$}
    \label{fig3a}
\end{subfigure}
\hfill
\begin{subfigure}[b]{0.48\columnwidth}
    \includegraphics[width=\linewidth]{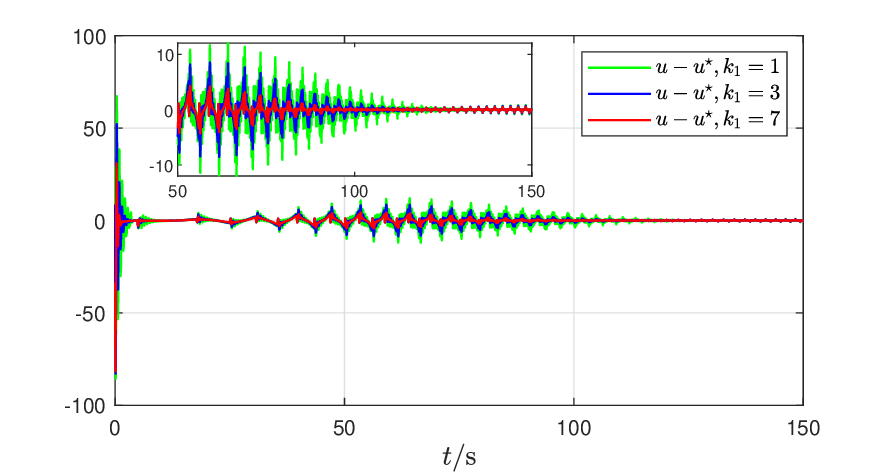}
    \caption*{(b) Deviation between $u$ and its ideal control law $u^\star$}
    \label{fig3b}
\end{subfigure}
\caption{Controller results under a cosine reference signal}
\label{fig3}
\end{figure}

\begin{figure}
\begin{centering}
\includegraphics[width=0.6\columnwidth]{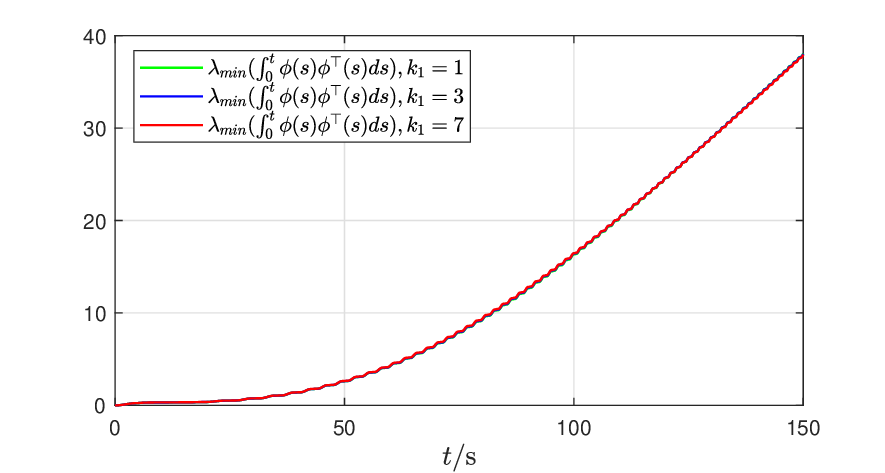}
\par\end{centering}
\caption{\label{fig4} PE condition under a cosine reference signal}
\end{figure}

%%%%%%%%%%%5%%%
\begin{figure}[htp]
\centering
\begin{subfigure}[b]{0.48\columnwidth}
    \includegraphics[width=\linewidth]{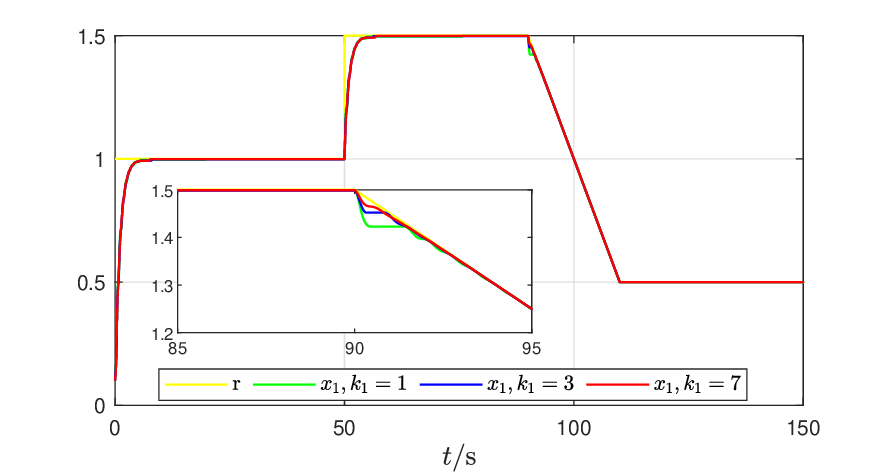}
    \caption{Comparison between $r$ and the state $x_1$}
    \label{fig1a}
\end{subfigure}
\hfill
\begin{subfigure}[b]{0.48\columnwidth}
    \includegraphics[width=\linewidth]{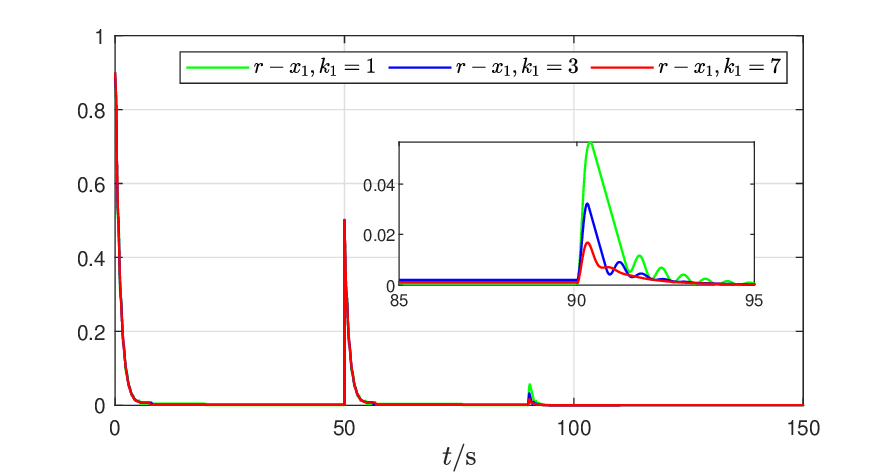}
    \caption{Trajectory of the tracking error $r - x_1$}
    \label{fig1b}
\end{subfigure}
\caption{Position tracking results under a step-plus-ramp reference signal}
\label{fig1}
\end{figure}

\begin{figure}[htp]
\centering
\begin{subfigure}[b]{0.48\columnwidth}
    \includegraphics[width=\linewidth]{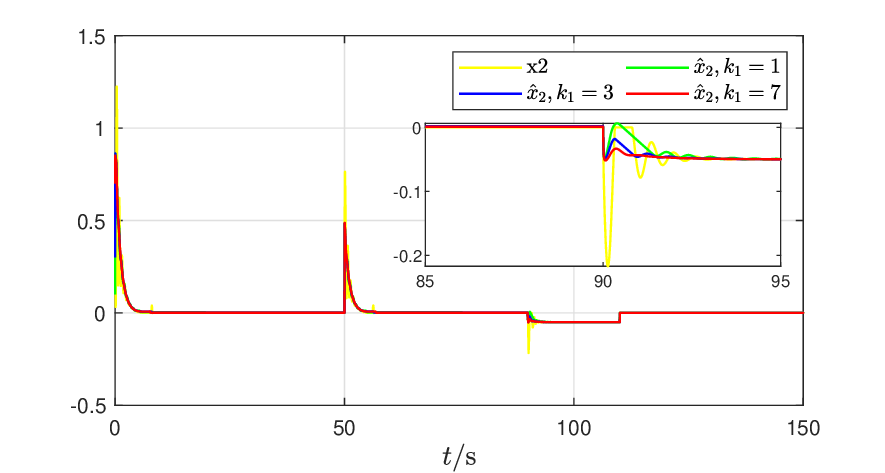}
    \caption*{(a) Comparison between $x_2$ and its estimate $\hat{x}_2$}
    \label{fig2a}
\end{subfigure}
\hfill
\begin{subfigure}[b]{0.48\columnwidth}
    \includegraphics[width=\linewidth]{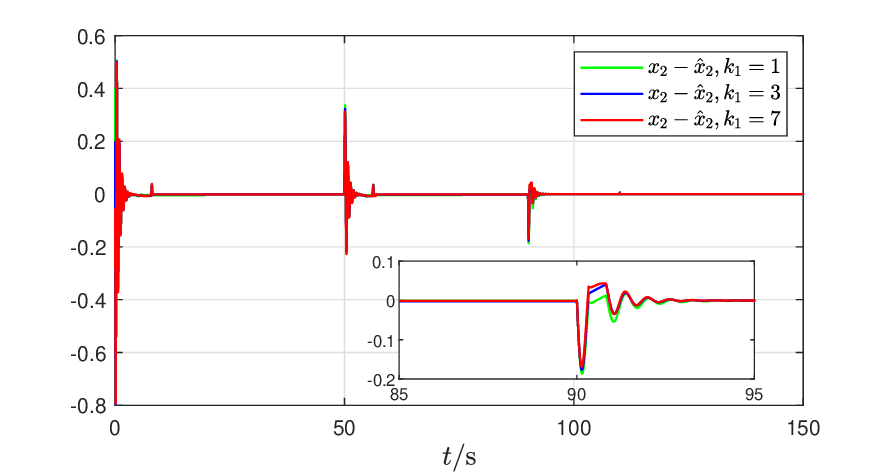}
    \caption*{(b) Trajectory of the observer error $x_2 - \hat{x}_2$}
    \label{fig2b}
\end{subfigure}
\caption{Observer results under a step-plus-ramp reference signal}
\label{fig2}
\end{figure}

\begin{figure}[htp]
\centering
\begin{subfigure}[b]{0.48\columnwidth}
    \includegraphics[width=\linewidth]{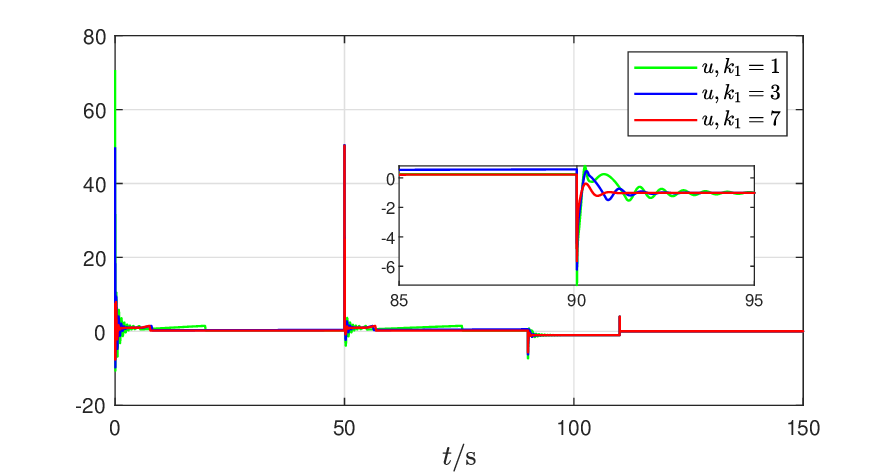}
    \caption*{(a) Trajectory of the actual control law $u$}
    \label{fig3a}
\end{subfigure}
\hfill
\begin{subfigure}[b]{0.48\columnwidth}
    \includegraphics[width=\linewidth]{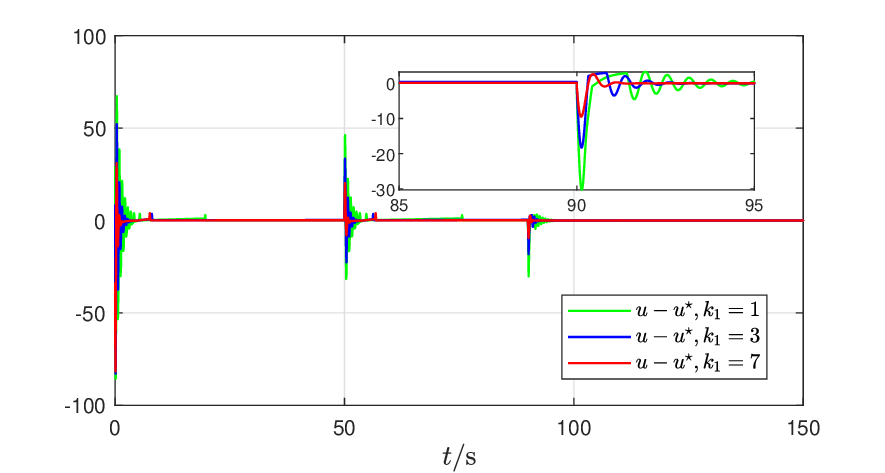}
    \caption*{(b) Deviation between $u$ and its ideal control law $u^\star$}
    \label{fig3b}
\end{subfigure}
\caption{Controller results under a step-plus-ramp reference signal}
\label{fig3}
\end{figure}
%%%%%%%%%%%%%%%
\section{Concluding Remarks}
\lab{sec6}
%%%%%%%%%%%%%%%5
%
We have presented in this paper the first solution of the problem of designing a globally convergent adaptive speed observer for a simple mechanical system perturbed by friction, which is modeled by the sum of stiction and Coulomb friction terms with unknown parameters. As an outcome of this result we can design a simple---certainty equivalent-based---speed tracking controller. The observer was designed following the well-known I\&I methodology that, in contrast with sliding-mode designs, does not rely on the deleterious injection of high-gain into the control loop. Simulation results show that the procedure is also applicable to friction forces which are described by dynamic models, in particular, the widely popular LuGre model. 
 
Designing an adaptive speed observer when the noise is described by a dynamic model---like LuGre, Dahl or Stribeck models---remains a challenging open problem. The presence of products between unmeasured states and uncertain parameters hampers the application of the I\&I technique used in this paper. Its solution definitely requires the development of totally new techniques, certainly not the simplistic approach of assuming the force is a vanishing Lipschitz function perturbation adopted in the sliding mode-based literature, which is rationalized with the argument that ``positions and velocities are physically constrained".
  
As a final comment to this work we point out that, in spite of the popularity in the control community of the friction models reported in the literature, widely used commercial multibody simulation packages such as Adams, RecurDyn, and Simpack have developed their own specific stick-slip models instead of adopting one of the public domain approaches. This situation rises a question mark on the practical applicability of the mathematical models considered in the control literature. The fundamentals of these commercial models and their behavior from a practical point of view may be found in  \cite{SCHetal}.

%%%%%%%%%%%%

%\bibliographystyle{ieeetr}
%\bibliography{ref}

%\appendices

\end{document}